\theoremstyle{plain}
\newtheorem{theorem}{\textbf{Theorem}}
\newtheorem{lemma}{\textbf{Lemma}}
\newtheorem{definition}{Definition}
\newtheorem{assumption}{Assumption}
\newtheorem{proposition}{\textbf{Proposition}}
\newtheorem{remark}{\textbf{Remark}}
\newtheorem{corollary}{\textbf{Corollary}}
\def\BibTeX{{\rm B\kern-.05em{\sc i\kern-.025em b}\kern-.08em
    T\kern-.1667em\lower.7ex\hbox{E}\kern-.125emX}}
\begin{document}

\title{Joint Source-Channel-Generation Coding: From Distortion-oriented Reconstruction to Semantic-consistent Generation}
\author{Tong Wu, Zhiyong Chen, Guo Lu, Li Song, Feng Yang, Meixia Tao, Wenjun Zhang\\
		Cooperative Medianet Innovation Center, Shanghai Jiao Tong University, Shanghai, China\\
		Email: \{wu\_tong, zhiyongchen, luguo2014, song\_li, yangfeng, mxtao, zhangwenjun\}@sjtu.edu.cn}

\maketitle

\begin{abstract}
Conventional communication systems, including both separation-based coding and AI-driven joint source-channel coding (JSCC), are largely guided by Shannon’s rate-distortion theory. However, relying on generic distortion metrics fails to capture complex human visual perception, often resulting in blurred or unrealistic reconstructions. In this paper, we propose Joint Source-Channel-Generation Coding (JSCGC), a novel paradigm that shifts the focus from deterministic reconstruction to probabilistic generation. JSCGC leverages a generative model at the receiver as a generator rather than a conventional decoder to parameterize the data distribution, enabling direct maximization of mutual information under channel constraints while controlling stochastic sampling to produce outputs residing on the authentic data manifold with high fidelity. We further derive a theoretical lower bound on the maximum semantic inconsistency with given transmitted mutual information, elucidating the fundamental limits of communication in controlling the generative process. Extensive experiments on image transmission demonstrate that JSCGC substantially improves perceptual quality and semantic fidelity, significantly outperforming conventional distortion-oriented JSCC methods.

\end{abstract}

\section{Introduction}
For decades, the architecture of many communication systems has been largely shaped by the reconstruction paradigm, rooted in Shannon's rate-distortion (RD) theory \cite{RD}. This framework treats the recovery of information as a deterministic task: the goal is to find a decoder that produces a point estimate $\hat{x}$ as close as possible to the original $x$ within a given bit rate. Guided by this principle, classical standards and recent deep joint source-channel coding (JSCC) schemes \cite{gundu2019, NTSCC, MambaJSCCWu, video} have focused on minimizing explicit distortion metrics, such as mean squared error (MSE). However, this pursuit of fidelity creates a fundamental conflict with visual perception \cite{blau2019rethinking}.

The inherent limitation of the reconstruction-oriented approach is its tendency to produce average results. When communication resources are constrained, MSE-based optimization often leads to the smoothing trap, where high-frequency details are discarded in favor of a blurred, statistically safe reconstruction. While perceptual metrics like Learned Perceptual Image Patch Similarity (LPIPS) \cite{LPIPSZhang} attempt to bridge this gap, they often introduce unnatural artifacts or high-frequency noise because they operate within the constraints of explicit, often imperfect, distortion functions. Though recent rate-distortion-perception (RDP) based coding schemes \cite{RDPJun1, RDPJun2, RDPC} introduce perceptual constraints, e.g., adversarial losses, into the optimization objective, they still operate within a deterministic reconstruction. This paradigm-level constraint, rather than the specific choice of distortion function, limits their ability to model the structure of the natural  data  manifolds.

\begin{figure}[t]
\centerline{\includegraphics[width=0.45\textwidth]{./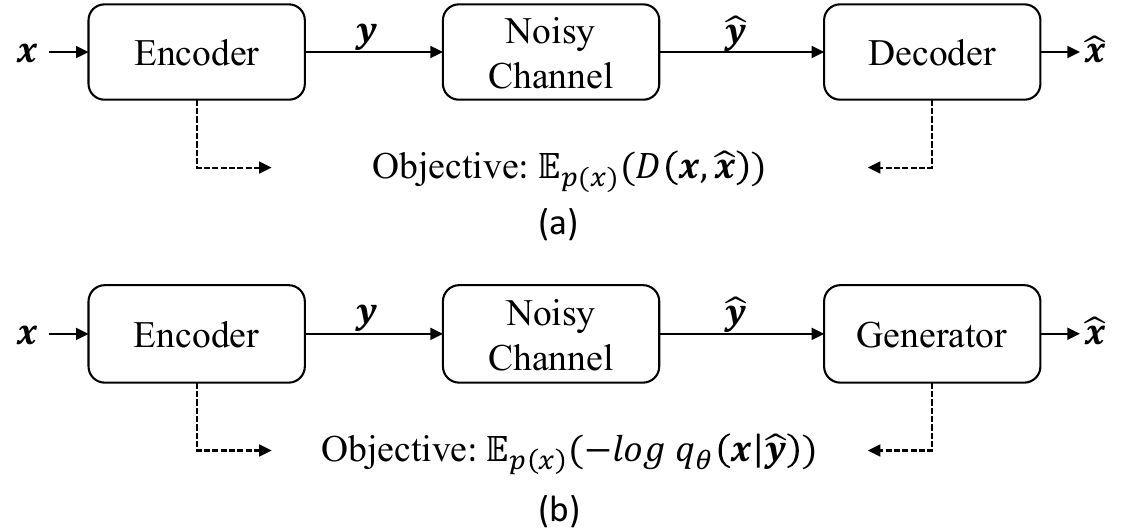}}
\caption{System model: (a) reconstruction paradigm; (b) generation paradigm.}
\label{System_model}
\vspace{-0.5cm}
\end{figure}

In this paper, we propose a new generative paradigm termed \textbf{Joint Source-Channel-Generation Coding (JSCGC)}. Our core philosophy is to advocate for \text{generation instead of reconstruction}. Rather than attempting to recover a single deterministic realization of the source data, JSCGC treats the receiver's task as a conditional stochastic generation problem. In this new paradigm, the received signal is not used to reconstruct data directly, but serves as a semantic guide to sample from the posterior distribution $p(x|\hat{y})$. By shifting from reconstruction to controlled stochastic generation, we enable the system to synthesize high-fidelity textures that are perceptually indistinguishable from authentic data even under severe channel noise, shifting the distortion to inconsistency.

Inspired by these insights, we design a joint coding and generation framework derived directly from the mutual information (MI) function, subject to a perceptual constraint. By leveraging generative models, e.g., diffusion models (DMs) \cite{DDPM, Score-DM, lipman2023flow, CDDM}, to parameterize the data distribution, we ensure that the generated content remains semantically consistent with the source without requiring a rigid loss function. Moreover, this paper provides a theoretical derivation of the lower bound on maximum semantic inconsistency, establishing a mathematical foundation for generative communications. To realize this paradigm, we implement an end-to-end system combining a Mamba-based encoder with a latent flow matching model \cite{z-image}, integrated via a communication-aware adapter that controls the generative process with received signals while effectively preserving large-scale pre-trained knowledge.
\section{System Model}

As shown in Fig.~\ref{System_model}, we consider a communication system in which the transmitter encodes the original data $\mathbf{x}\in \mathbb{R}^m$ into $\mathbf{y}\in\mathbb{C}^l$ via an encoder $E_\phi(\cdot)$ parameterized by $\phi$, where $l<m$. The encoded signal $\mathbf{y}$ is then transmitted through a communication channel, resulting in the received signal $\hat{\mathbf{y}} \in \mathbb{C}^l$. 
At the receiver, unlike traditional JSCC schemes that perform deterministic reconstruction, we formulate the decoding process as a conditional generative problem, in which the recovered data $\hat{\mathbf{x}} \in \mathbb{R}^m$ is sampled from the conditional distribution $q_\theta(\mathbf{x}|\hat{\mathbf{y}})$ using a generative model $G_\theta(\cdot)$ parameterized by $\theta$.

Accordingly, the overall communication process can be modeled as the following Markov chain:
\begin{align}\label{CMC}
\mathbf{X} \rightarrow \mathbf{Y} \rightarrow \hat{\mathbf{Y}} \rightarrow \hat{\mathbf{X}}.
\end{align}
Here, $\mathbf{X},\  \mathbf{Y},\  \hat{\mathbf{Y}},$ and $\hat{\mathbf{X}}$ denote the random variables corresponding to the realizations $\mathbf{x},\  \mathbf{y},\  \hat{\mathbf{y}},$ and $\hat{\mathbf{x}}$, respectively.

As established, the fundamental objective of communication is the reliable transfer of information. Meanwhile, the recovered signal $\hat{\mathbf{X}}$ is intended for human perception, and the system must also ensure that $\hat{\mathbf{X}}$ is perceptually faithful to the source. This necessitates a perceptual constraint, requiring the distribution of the generative signal to closely align with the natural image distribution $p(\mathbf{X})$. 

Therefore, for the communication system defined in (\ref{CMC}), we formulate the joint encoding and generation scheme as the following constrained optimization problem:
\begin{align}\label{training_objective}
  &\max_{\theta, \phi}\  I(\mathbf{X}; \hat{\mathbf{Y}}), \nonumber\\
  &\text{s.t.}\quad d_p\big(p(\mathbf{X}), p(G_\theta(\hat{\mathbf{Y}}))\big) \leq \zeta.
\end{align}
where $d_p(\cdot)$ denotes a divergence metric that quantifies the perceptual discrepancy between the distributions of $\mathbf{X}$ and $\hat{\mathbf{X}}$, and $\zeta$ is a prescribed tolerance.

This design objective sets our work apart from traditional RD or RDP approaches. The primary distinction lies in our departure from explicit distortion functions; instead, we formulate the joint coding and generation scheme by maximizing mutual information subject exclusively to perceptual constraints.
\begin{remark}
Given that we design the coding and generation scheme in JSCGC from the perspective of MI, we primarily aim to explicitly obtain the encoding parameters $\phi$ and generative parameters $\theta$ through optimization from (\ref{training_objective}).
\end{remark}
In practice, the mutual information terms involved in \eqref{training_objective} cannot be computed in closed form due to the unknown distributions of high-dimensional signals. To address this challenge, we propose a practical JSCGC scheme based on deep neural networks to approximate the objective in \eqref{training_objective}.

\section{Proposed JSCGC Scheme}
In this section, we introduce the JSCGC scheme, optimized via joint source-channel-generation design. We provide a perceptually-constrained sampling algorithm for high-quality generation and derive the theoretical lower bound for maximum semantic inconsistency.

\subsection{JSCGC Training Algorithm}
Since the source distribution \(p(\mathbf{X})\) is fixed,  maximizing $I(\mathbf{X}; \hat{\mathbf{Y}})$ is equivalent to minimizing the conditional entropy $H(\mathbf{X} | \hat{\mathbf{Y}})=\mathbb{E}_{(\mathbf{x},\hat{\mathbf{y}})\sim p(\mathbf{x},\hat{\mathbf{y}})}\left[-\log p(\mathbf{x}|\hat{\mathbf{y}})\right]$. However, the true posterior distribution $p(\mathbf{x}|\hat{\mathbf{y}})$ requires knowledge of the ground-truth joint distribution of natural data and channel outputs, which is theoretically intractable.
To address this issue, we have the following proposition.
\begin{proposition}\label{prop:variational_info}
The conditional entropy $H(\mathbf{X}|\hat{\mathbf{Y}})$ can be minimized via the following gradient-based loss:
  \begin{align}
  \mathcal{L}(\theta,\phi)= \mathbb{E}_{t, \mathbf{x}_0, \mathbf{x}_T} \left[ \| (\mathbf{x}_T - \mathbf{x}_0) - \mathbf{\nu}_\theta(\mathbf{x}_t, t, \hat{\mathbf{y}}) \|^2 \right].
  \end{align}
where $\mathbf{x}_0 \sim p(\mathbf{x})$, $\mathbf{x}_T \sim N(0,\mathbf{I})$, $\mathbf{x}_t=(1-\frac{t}{T})\mathbf{x}_0+\frac{t}{T} \mathbf{x}_T$ for $t\in\{1,2,...,T\}$, and $\mathbf{\nu}_\theta(\cdot)$ is the output of the neural network in the generator.
\end{proposition}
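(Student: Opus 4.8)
The plan is to establish the proposition in two stages: first, reduce the minimization of $H(\mathbf{X}|\hat{\mathbf{Y}})$ to a variational problem over generative parameters, and second, identify the resulting variational bound with the flow-matching objective. For the first stage, I would introduce the variational distribution $q_\theta(\mathbf{x}|\hat{\mathbf{y}})$ parameterized by the generator and invoke the standard non-negativity of the KL divergence, $\mathbb{E}_{\hat{\mathbf{y}}}\big[D_{\mathrm{KL}}(p(\mathbf{x}|\hat{\mathbf{y}})\,\|\,q_\theta(\mathbf{x}|\hat{\mathbf{y}}))\big]\geq 0$, which immediately yields the cross-entropy upper bound
\begin{equation}
H(\mathbf{X}|\hat{\mathbf{Y}}) \leq \mathbb{E}_{(\mathbf{x},\hat{\mathbf{y}})\sim p(\mathbf{x},\hat{\mathbf{y}})}\big[-\log q_\theta(\mathbf{x}|\hat{\mathbf{y}})\big],
\end{equation}
with equality when $q_\theta = p$. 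Hence minimizing the right-hand side over $\theta$ (jointly with $\phi$, which shapes $p(\mathbf{x},\hat{\mathbf{y}})$ through the encoder) drives down $H(\mathbf{X}|\hat{\mathbf{Y}})$, and maximizing $I(\mathbf{X};\hat{\mathbf{Y}})$.

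For the second stage, I would specify $q_\theta(\mathbf{x}|\hat{\mathbf{y}})$ as the distribution induced by the conditional flow matching / rectified-flow model defined by the interpolation path $\mathbf{x}_t=(1-\tfrac{t}{T})\mathbf{x}_0+\tfrac{t}{T}\mathbf{x}_T$ and the learned velocity field $\mathbf{\nu}_\theta(\mathbf{x}_t,t,\hat{\mathbf{y}})$. The key fact to invoke here is the now-standard equivalence (from the flow matching literature, e.g.\ \cite{lipman2023flow}) between the negative log-likelihood of a continuous normalizing flow and the conditional flow matching regression loss: up to constants independent of $\theta$, minimizing $\mathbb{E}[-\log q_\theta(\mathbf{x}|\hat{\mathbf{y}})]$ is equivalent to minimizing the expected squared error between the network output and the conditional target velocity, which for the linear interpolation path is exactly $\mathbf{x}_T-\mathbf{x}_0$. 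Substituting this target gives
\begin{equation}
\mathcal{L}(\theta,\phi)= \mathbb{E}_{t,\mathbf{x}_0,\mathbf{x}_T}\big[\|(\mathbf{x}_T-\mathbf{x}_0)-\mathbf{\nu}_\theta(\mathbf{x}_t,t,\hat{\mathbf{y}})\|^2\big],
\end{equation}
with $\hat{\mathbf{y}}$ generated from $\mathbf{x}_0$ through the encoder and channel, and $t$ drawn uniformly over $\{1,\dots,T\}$. Since $\mathcal{L}$ upper-bounds (a constant plus a positive multiple of) the cross-entropy, gradient descent on $\mathcal{L}$ with respect to $(\theta,\phi)$ minimizes $H(\mathbf{X}|\hat{\mathbf{Y}})$.

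The main obstacle I anticipate is making the second-stage equivalence fully rigorous rather than merely citing it: the clean identity between CNF likelihood and the flow matching loss holds for the continuous-time formulation with the marginal velocity field, whereas the proposition is stated with a discrete time index $t\in\{1,\dots,T\}$ and the \emph{conditional} (per-sample) target $\mathbf{x}_T-\mathbf{x}_0$. Bridging this gap requires two observations: (i) the conditional and marginal flow matching losses differ by a $\theta$-independent constant, so they have the same minimizer and the same gradient in $\theta$ — this is the standard Lipman et al.\ argument via exchanging the order of integration; and (ii) the discrete sum over $t$ is a quadrature approximation of the time integral, consistent in the $T\to\infty$ limit. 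In a conference-length treatment I would state these as the operative facts, give the KL/cross-entropy bound in full, and defer the flow-matching-likelihood equivalence to the cited references, noting that the interpolation is chosen precisely so that the regression target is the elementary expression $\mathbf{x}_T-\mathbf{x}_0$.
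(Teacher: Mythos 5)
Your first stage coincides exactly with the paper's: both introduce the variational distribution $q_\theta(\mathbf{x}|\hat{\mathbf{y}})$ and pass from $H(\mathbf{X}|\hat{\mathbf{Y}})$ to the cross-entropy upper bound $\mathbb{E}[-\log q_\theta(\mathbf{x}|\hat{\mathbf{y}})]$ via non-negativity of the KL divergence. The second stage is where you genuinely diverge. The paper stays in discrete time and performs a DDPM-style decomposition of the cross-entropy into a reconstruction term, a prior-matching term, and a sum of per-step posterior KL terms $D_{\mathrm{KL}}\big(p(\mathbf{x}_{t-1}|\mathbf{x}_t,\mathbf{x}_0,\hat{\mathbf{y}})\,\|\,q_\theta(\mathbf{x}_{t-1}|\mathbf{x}_t,\hat{\mathbf{y}})\big)$; since the linear interpolation makes $\mathbf{x}_{t-1}$ a deterministic function of $(\mathbf{x}_t,\mathbf{x}_0)$, it models both posteriors as Gaussians with a common vanishing variance $\sigma^2$, so each KL collapses to $\tfrac{1}{2\sigma^2}\|\mu-\mu_\theta\|^2$, and reparameterizing the means in terms of the velocity $\mathbf{v}=\mathbf{x}_T-\mathbf{x}_0$ yields the stated loss. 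You instead jump to the continuous-time CNF picture and assert an equivalence, ``up to constants independent of $\theta$,'' between the flow's negative log-likelihood and the conditional flow matching regression loss.

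That asserted equivalence is the genuine gap. What Lipman et al.\ prove is that the \emph{conditional} and \emph{marginal} flow matching losses have identical $\theta$-gradients (your observation (i), which is correct), not that either one equals $\mathbb{E}[-\log q_\theta(\mathbf{x}|\hat{\mathbf{y}})]$ up to a $\theta$-independent constant: the CNF log-likelihood is computed through the instantaneous change-of-variables formula, i.e.\ the integrated divergence of the velocity field along the trajectory, and is not an $L^2$ regression functional. So the citation you plan to defer to does not supply the identity you need; you would have to either establish an explicit likelihood (or KL) bound for the flow matching objective, or do what the paper does and construct a discrete-time ELBO surrogate whose posterior terms reduce to the squared-error loss. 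To be fair, the paper's own route has a symmetric soft spot (the $\sigma\to 0$ Gaussian relaxation makes the bound's constant blow up, and the reconstruction term is silently dropped), so neither argument is fully rigorous; but as written your stage two hinges on a theorem that is not in the literature in the form you state it, and should at minimum be downgraded from ``equivalence'' to ``surrogate bound.'' Your point (ii) on time discretization is a real issue that the paper also does not address.
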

\begin{proof}
  We introduce a parameterized variational distribution $q_\theta (\mathbf{x}|\hat{\mathbf{y}})$. Therefore, its evidence lower bound with respect to the variables $\mathbf{x}_t$ is:
  \begin{equation}
  \begin{aligned}
&H(\mathbf{X} | \hat{\mathbf{Y}}) \leq \mathbb{E}\left[-\log q_\theta(\mathbf{x}|\hat{\mathbf{y}})\right] \leq \mathbb{E}_{p}\left[-\log \frac{q_\theta(\mathbf{x}_{0:T}|\hat{\mathbf{y}})}{p(\mathbf{x}_{1:T}|\mathbf{x}_0, \hat{\mathbf{y}})}\right]\\
&=\mathbb{E}_{p} \Bigg[ 
-\underbrace{\log q_\theta(\mathbf{x}_0 | \mathbf{x}_1, \hat{\mathbf{y}})}_{\text{Reconstruction Term}} + \underbrace{D_{\text{KL}}\Big(p(\mathbf{x}_T|\mathbf{x}_0) \,\|\, p(\mathbf{x}_T)\Big)}_{\text{Prior Matching Term}}\\
& + \sum_{t=2}^{T} \underbrace{D_{\text{KL}}\Big(p(\mathbf{x}_{t-1}|\mathbf{x}_t, \mathbf{x}_0, \hat{\mathbf{y}}) \,\|\, q_\theta(\mathbf{x}_{t-1}|\mathbf{x}_t, \hat{\mathbf{y}})\Big)}_{\text{Posterior Term}} \Bigg].
\end{aligned}
\end{equation}
Here, we only need to optimize the posterior term as it quantifies the consistency between the learnable posterior distribution $q_\theta(\cdot)$ and the true posterior $p(\cdot)$, which guarantees the consistency along the trajectory. In this term, the true posterior is tractable. Given $\mathbf{x}_t$ and $\mathbf{x}_0$, $\mathbf{x}_{t-1}$ is deterministic
\begin{equation}\label{post_mean}
\mathbf{x}_{t-1}=\frac{t-1}{t}\mathbf{x}_t+\frac{1}{t}\mathbf{x}_0.
\end{equation}

To prevent mathematical singularities during the KL divergence computation, we formulate the posterior as a Gaussian distribution characterized by a vanishingly small variance $\sigma^2$: 
\begin{equation}
  \begin{aligned}
p(\mathbf{x}_{t-1}|\mathbf{x}_t, \mathbf{x}_0, \hat{\mathbf{y}}) & = \mathcal{N}\left(\mathbf{x}_{t-1};  \mathbf{\mu}(\mathbf{x}_t, t, \mathbf{x}_0), \sigma^2 \mathbf{I}\right)\\
 &= \mathcal{N}\left(\mathbf{x}_{t-1}; \frac{t-1}{t}\mathbf{x}_t+\frac{1}{t}\mathbf{x}_0, \sigma^2 \mathbf{I}\right).
  \end{aligned}
\end{equation}

Therefore, we also formulate the parameterized distribution as a Gaussian distribution with mean $\mathbf{\mu}_\theta(\mathbf{x}_t, t, \hat{\mathbf{y}})$ and variance $\sigma^2 \mathbf{I}$. Therefore, we can compute the posterior term as:
\begin{equation}\label{KL_divergence}
\begin{aligned}
&D_{\text{KL}}\left( \mathcal{N}(\mathbf{\mu}(\mathbf{x}_t, \mathbf{x}_0, t), \sigma^2 \mathbf{I}) \,\|\, \mathcal{N}(\mathbf{\mu}_\theta(\mathbf{x}_t, t, \hat{\mathbf{y}}), \sigma^2 \mathbf{I}) \right) \\
&= \frac{1}{2\sigma^2} \| \mathbf{\mu}(\mathbf{x}_t, \mathbf{x}_0, t) - \mathbf{\mu}_\theta(\mathbf{x}_t, t, \hat{\mathbf{y}}) \|^2. \hspace{-0.5cm}
\end{aligned}
\end{equation}

\addtolength{\topmargin}{0.02in}
To facilitate the training efficiency, we reparameterize the posterior mean to predict the velocity field $\mathbf{v} = \frac{1}{t}(\mathbf{x}_t-\mathbf{x}_0)=\mathbf{x}_T - \mathbf{x}_0$. Therefore, we have 
\begin{equation}
\mathbf{\mu}(\mathbf{x}_t, \mathbf{x}_0, t) = \frac{t-1}{t}\mathbf{x}_t + \frac{1}{t} \left( \mathbf{x}_t - \frac{t}{T}\mathbf{v} \right)= \mathbf{x}_t - \frac{1}{T} \mathbf{v}.
\end{equation}
We further reparameterize the learnable posterior mean as:
\begin{equation}
\mathbf{\mu}_\theta(\mathbf{x}_t, t, \hat{\mathbf{y}}) = \mathbf{x}_t - \frac{1}{T} \mathbf{\nu}_\theta(\mathbf{x}_t, t, \hat{\mathbf{y}}).
\end{equation}
Substituting these into (\ref{KL_divergence}), we can write the tractable optimization objective as:
\begin{equation}\label{training_loss}
\mathcal{L}(\theta,\phi)= \mathbb{E}_{t, \mathbf{x}_0, \mathbf{x}_T} \left[ \| (\mathbf{x}_T - \mathbf{x}_0) - \mathbf{\nu}_\theta(\mathbf{x}_t, t, \hat{\mathbf{y}}) \|^2 \right].
\end{equation}
\end{proof}

The gradient of the loss function with respect to $\phi$ can be computed via the chain rule with channel function $\hat{\mathbf{y}} = W(\mathbf{y})$:
\begin{equation}
  \begin{aligned}
    \nabla_\phi \mathcal{L} = -2 \mathbb{E}_{t, \mathbf{x}_0, \mathbf{x}_T} \Bigg[ \underbrace{\left( (\mathbf{x}_T - \mathbf{x}_0) - \mathbf{\nu}_\theta(\mathbf{x}_t, t, \hat{\mathbf{y}}) \right)}_{\text{Estimation Error}} \\ \cdot \underbrace{\nabla_{\hat{\mathbf{y}}} \mathbf{\nu}_\theta(\mathbf{x}_t, t, \hat{\mathbf{y}})}_{\text{Generation Term}} \cdot \underbrace{\nabla_{\mathbf{y}} W(\mathbf{y})}_{\text{Channel Term}} \cdot \underbrace{\nabla_\phi E_\phi(\mathbf{x}_0)}_{\text{Source Term}} \Bigg].
  \end{aligned}
  \end{equation}

\textbf{This derivation elucidates the joint source-channel-generation nature of our scheme, as the gradient encapsulates the interactions between the source-channel coding and generator simultaneously.}

Unlike the rate-distortion paradigm, which relies on explicit distortion metrics, our approach leverages generative models to approximate the data distribution, enabling end-to-end joint optimization from a mutual information perspective.

\subsection{Sampling Algorithm of the JSCGC}
Now, we need to recover $\hat{\mathbf{X}}$ from $\hat{\mathbf{Y}}$. In this subsection, we formulate the recovery process as stochastic sampling from $p(\mathbf{x}|\hat{\mathbf{y}})$ to not only consider the perceptual constraint in the optimization problem \eqref{training_objective} but also ensure semantic consistency.
 
The remaining challenge is how to generate samples from this specific conditional distribution. Fortunately, our joint training JSCGC scheme naturally addresses this. During training, we not only get a well-trained encoder, but also derive a generator which approximates the true posterior $p(\mathbf{x}|\hat{\mathbf{y}})$ with the parameterized distribution $q_\theta(\mathbf{x}|\hat{\mathbf{y}})$. The generator can estimate the velocity field $\mathbf{\nu}_\theta(\mathbf{x}_t, t, \hat{\mathbf{y}})$ at each time step $t$ according to the training loss function in (\ref{training_loss}).
Therefore, we can derive the ordinary differential equation (ODE) form of the forward diffusion process as:
\begin{equation}
   \frac{d\mathbf{x}_t}{dt} = \frac{\mathbf{v}}{T}.
\end{equation}

We can reverse this ODE from the noise state $\mathbf{x}_T$ to the clean state $\mathbf{x}_0$ as:
\begin{equation}
   \mathbf{x}_0 = \mathbf{x}_T + \frac{1}{T}\int_{T}^{0} \mathbf{v} dt. 
\end{equation}

\addtolength{\topmargin}{0.02in}
By substituting the neural velocity field $\mathbf{\nu}_\theta(\mathbf{x}_t, t, \hat{\mathbf{y}})$ for $\mathbf{v}$ under the guidance of the received signal $\hat{\mathbf{y}}$, we can derive the sampling algorithm as:
\begin{equation}
    \hat{\mathbf{x}} = \mathbf{x}_T + \frac{1}{T}\int_{T}^{0} \mathbf{\nu}_\theta(\mathbf{x}_t, t, \hat{\mathbf{y}}) dt.
\end{equation}
where $\mathbf{x}_T$ is the Gaussian noise sample from the standard Gaussian distribution. The function can be efficiently solved using numerical ODE solvers such as the Euler method with finite iterative steps. 

By sampling $\hat{\mathbf{x}}$ from the learned distribution $q_\theta(\mathbf{x}|\hat{\mathbf{y}})$, which approximates the true posterior $p(\mathbf{x}|\hat{\mathbf{y}})$, the generated content $\hat{\mathbf{x}}$ necessarily satisfies the perceptual constraint in \eqref{training_objective}.

\subsection{Theoretical Bound on Maximum Semantic Inconsistency}
While generator enables perceptually faithful outputs under limited communication resources, it inherently introduces the risk of semantic uncertainty. Quantifying the fundamental limit of this inconsistency is therefore essential for characterizing the performance ceiling of JSCGC. We analyze the theoretical limit of semantic inconsistency for JSCGC  from the manifold assumption \cite{JoshuaB,fefferman2016testing} and the sphere packing argument. 
\begin{assumption}
  Natural data $\mathbf{x}$ is assumed to reside on a low-dimensional manifold $\mathcal{M}$ embedded in the high-dimensional ambient space $\mathbb{R}^m$. The intrinsic dimension of the manifold, denoted by $d$, satisfies $d < m$.
\end{assumption}
Under the manifold assumption, we introduce the sphere packing argument.
\begin{lemma}
\label{lemma:covering_number}
Let the data manifold $\mathcal{M}$ be covered by a set of disjoint semantic balls $\mathcal{B}(\cdot, \epsilon)$ with radius $\epsilon$. According to Kolmogorov's $\epsilon$-entropy theory \cite{KolTik59}, the number of such balls $K(\epsilon)$ is dependent on the radius $\epsilon$ and can be expressed as: 
\begin{equation}
    \lim_{\epsilon \to 0} \frac{\log K(\epsilon)}{-\log \epsilon} = d \implies K(\epsilon) \approx C \cdot \epsilon^{-d},
\end{equation}
where $C$ is a geometric constant dependent on the volume of the manifold $\mathcal{M}$.
\end{lemma}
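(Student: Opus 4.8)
The plan is to reduce this global counting statement to a uniform local volume estimate on the manifold and then invoke the standard equivalence between packing numbers, covering numbers, and volume ratios. First I would use the manifold assumption together with compactness of $\mathcal{M}$: a compact $d$-dimensional Riemannian submanifold of $\mathbb{R}^m$ has a strictly positive injectivity radius $r_0$ and bounded sectional curvature, so for every $\epsilon < r_0$ and every center $\mathbf{x}\in\mathcal{M}$ the geodesic ball $\mathcal{B}(\mathbf{x},\epsilon)$ is diffeomorphic to a Euclidean $d$-ball through the exponential map. Expanding the metric in normal coordinates (equivalently, applying Bishop--Gromov volume comparison), the Riemannian volume of such a ball satisfies $\mathrm{vol}(\mathcal{B}(\mathbf{x},\epsilon)) = \omega_d\,\epsilon^{d}\,(1+O(\epsilon^{2}))$, where $\omega_d$ is the volume of the unit ball in $\mathbb{R}^d$ and, crucially, the $O(\epsilon^{2})$ curvature correction is uniform in $\mathbf{x}$ by compactness.

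Next I would run a maximal packing argument. Let $\{\mathbf{x}_i\}_{i=1}^{K}$ be a maximal set of points that are pairwise at geodesic distance at least $\epsilon$; then the balls $\mathcal{B}(\mathbf{x}_i,\epsilon/2)$ are disjoint, while the balls $\mathcal{B}(\mathbf{x}_i,\epsilon)$ cover $\mathcal{M}$ by maximality. Summing the disjoint-ball volumes gives $K\cdot\omega_d(\epsilon/2)^{d}(1+O(\epsilon^{2})) \le \mathrm{vol}(\mathcal{M})$, and the covering property gives $K\cdot\omega_d\epsilon^{d}(1+O(\epsilon^{2})) \ge \mathrm{vol}(\mathcal{M})$. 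Combining the two estimates sandwiches $K(\epsilon)\,\epsilon^{d}$ between two positive constants depending only on $\mathcal{M}$ and $d$, i.e. $K(\epsilon) = \Theta\!\big(\mathrm{vol}(\mathcal{M})\,\omega_d^{-1}\,\epsilon^{-d}\big)$; absorbing $\mathrm{vol}(\mathcal{M})/\omega_d$ (up to the packing/covering factor $2^{d}$) into the geometric constant $C$ yields $K(\epsilon)\approx C\,\epsilon^{-d}$.

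Finally, taking logarithms of the sandwich bound gives $\log K(\epsilon) = -d\log\epsilon + \log C + o(\log(1/\epsilon))$ as $\epsilon\to 0$, so dividing by $-\log\epsilon$ and letting $\epsilon\to 0$ produces the Kolmogorov $\epsilon$-entropy identity $\lim_{\epsilon\to 0}\frac{\log K(\epsilon)}{-\log\epsilon}=d$, which is precisely the metric-entropy characterization of dimension from \cite{KolTik59,KolTik59}. The disjointness required of the semantic balls in the statement is exactly the packing side of this argument, so no extra work is needed there.

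The main obstacle I expect is making the local-to-global volume estimate rigorous and \emph{uniform}: one must guarantee that the $O(\epsilon^{2})$ curvature term in the ball-volume expansion does not depend on the center, which is where the compactness/bounded-geometry hypothesis is essential, and one must control overlaps where geodesic balls straddle different coordinate patches. Once the uniform volume estimate is established, the packing/covering comparison and the logarithmic limit are routine.
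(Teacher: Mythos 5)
Your proposal is correct, and it is worth noting that the paper itself offers no proof of this lemma at all: it is asserted by appeal to Kolmogorov's $\epsilon$-entropy theory \cite{KolTik59}, so your packing/covering volume argument is a genuine addition rather than a restatement. Your route --- compactness $\Rightarrow$ positive injectivity radius and uniform ball-volume expansion $\mathrm{vol}(\mathcal{B}(\mathbf{x},\epsilon))=\omega_d\,\epsilon^d(1+O(\epsilon^2))$, then a maximal $\epsilon$-separated set whose half-radius balls pack and whose full-radius balls cover --- is the standard way to make the lemma rigorous, and you are right that it requires a hypothesis the paper leaves implicit (compactness, or at least finite volume and bounded geometry of $\mathcal{M}$). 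You also correctly resolve the internal tension in the lemma's phrasing: disjoint balls cannot literally cover a connected manifold, and your observation that disjointness is the packing side while covering is the covering side, with both giving the same exponent $d$, is exactly the right reading. Two small caveats worth keeping in mind. First, the limit $\lim_{\epsilon\to 0}\log K(\epsilon)/(-\log\epsilon)=d$ by itself does \emph{not} imply $K(\epsilon)\approx C\,\epsilon^{-d}$ for a fixed constant $C$ (it is compatible with, e.g., $K(\epsilon)\sim \epsilon^{-d}\log(1/\epsilon)$); it is your volume sandwich, not the logarithmic limit, that actually delivers the $\Theta(\epsilon^{-d})$ form, and even then the constant is pinned down only up to the factor $2^d$ separating the packing and covering sides. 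Since the downstream use in Theorem~1 only needs $K(\delta)\gtrsim C\,\delta^{-d}$ up to such a constant, this looseness is harmless, but it is worth stating that $C$ is determined only up to bounded multiplicative slack. Second, your worry about geodesic versus ambient (Euclidean) balls is legitimate: for an embedded submanifold the two metrics are bi-Lipschitz equivalent on small scales (again by compactness), so the exponent $d$ is unaffected, but the constant $C$ changes; a one-line remark to that effect would close the gap between the intrinsic argument you give and the ambient-space balls the paper implicitly uses when bounding $\|\hat{\mathbf{x}}-\mathbf{x}\|$.
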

Accordingly, we define the following event.
\begin{definition}
The probability $P_e$ that the generated result $\hat{\mathbf{x}}$ falls outside the ball $\mathcal{B}(\mathbf{x}, \epsilon)$ containing the original data $\mathbf{x}$ with radius $\epsilon$ is defined as:
\begin{equation}
    P_e = P(\hat{\mathbf{x}} \notin \mathcal{B}(\mathbf{x}, \epsilon)).
\end{equation}
\end{definition} 
Now we consider the semantic inconsistency inherent in the generative process. Due to its stochastic nature, while the generated content may ideally align with the source data, a rigorous performance analysis necessitates focusing on the worst-case scenario. In the absence of communication signals, generated samples would span the entire data manifold. However, the received signal $\hat{\mathbf{y}}$ imposes a semantic constraint that confines the generative process to a localized region. Following the the sphere packing argument in Lemma \ref{lemma:covering_number}, we model this valid region as a ball $\mathcal{B}(\cdot,\delta)$ with radius $\delta$, such that the generated samples reside within $\mathcal{B}(\cdot, \delta)$ with probability $1-P_e$ (where $P_e \to 0$). Given that the maximum deviation occurs at the boundary, we define the maximum semantic inconsistency as the radius $\delta$.

\begin{definition}
  The maximum semantic inconsistency between the generated $\hat{\mathbf{x}}$ and the original data $\mathbf{x}$ is defined as the radius $\delta$ of a ball $\mathcal{B}(\mathbf{x}, \delta)$ such that the generated samples lie within this ball with probability $1-P_e$, where $P_e \to 0$.
\end{definition}
It is evident that a sufficiently large $\delta$ can trivially cover the entire data manifold. Therefore, our objective is to characterize the minimum admissible value of $\delta$.
\begin{definition}
  The lower bound of the maximum semantic inconsistency $\delta^*$ is defined as:
  \begin{equation} 
    \delta^* \triangleq \inf \left\{ r \in \mathbb{R}^+ \;\middle|\; P\left( \hat{\mathbf{x}} \notin \mathcal{B}(\mathbf{x}, r) \right) \to 0 \right\}.
    \end{equation}
\end{definition}

Based on this definition, we can derive a theoretical lower bound on the maximum semantic inconsistency $\delta$.
\begin{theorem}
\label{thm:min_info_bound}
Let $\mathcal{M}$ be a data manifold with intrinsic dimension $d$, and let $C$ denote a geometric constant. Given the mutual information $I(\mathbf{X}; \hat{\mathbf{Y}})$, the lower bound of the maximum semantic inconsistency $\delta^*$ is:
\begin{equation}
  \delta^*=C^{\frac{1}{d}} 2^{-\frac{I(\mathbf{X} ; \hat{\mathbf{Y}})+1}{d}}.
\end{equation}
\end{theorem}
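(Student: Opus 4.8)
The plan is to run a sphere-packing (Fano-type converse) argument in which the quantity ``which semantic ball does $\mathbf{x}$ fall into'' plays the role of a discrete message that must be conveyed by $\hat{\mathbf{Y}}$. First I would fix any admissible radius $\delta$, i.e.\ any $\delta$ for which the generated $\hat{\mathbf{x}}$ lies in $\mathcal{B}(\mathbf{x},\delta)$ with probability $1-P_e$, $P_e\to 0$, and tile $\mathcal{M}$ with the $K(\delta)\approx C\,\delta^{-d}$ disjoint semantic balls provided by Lemma~\ref{lemma:covering_number}. Define a discrete random variable $Z$ as the index of the ball containing $\mathbf{X}$, and $\hat{Z}$ as the index of the ball selected at the receiver (equivalently, the ball containing $\hat{\mathbf{X}}$). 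Since $Z$ is a deterministic function of $\mathbf{X}$ and $\hat{Z}$ is derived from $\hat{\mathbf{Y}}$ through the generator, the chain $Z\to\mathbf{X}\to\hat{\mathbf{Y}}\to\hat{\mathbf{X}}\to\hat{Z}$ is Markov, so the data-processing inequality gives $I(Z;\hat{Z})\le I(\mathbf{X};\hat{\mathbf{Y}})$.

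Next I would lower-bound $I(Z;\hat{Z})$ by Fano's inequality. The confinement event $\{\hat{\mathbf{x}}\in\mathcal{B}(\mathbf{x},\delta)\}$ forces $\hat{Z}=Z$ except on an event of vanishing probability, so $H(Z\mid\hat{Z})\le 1+P_e\log_2 K(\delta)$. Taking the worst case in which the semantic label is maximally uncertain, $H(Z)=\log_2 K(\delta)$, hence $I(Z;\hat{Z})=H(Z)-H(Z\mid\hat{Z})\ge (1-P_e)\log_2 K(\delta)-1$. Letting $P_e\to 0$ and combining with the data-processing step yields $I(\mathbf{X};\hat{\mathbf{Y}})\ge \log_2 K(\delta)-1=\log_2\!\big(C\,\delta^{-d}\big)-1$.

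Finally I would solve this inequality for $\delta$: rearranging $\log_2 C - d\log_2\delta - 1\le I(\mathbf{X};\hat{\mathbf{Y}})$ gives $\log_2\delta\ge \tfrac{1}{d}\big(\log_2 C - I(\mathbf{X};\hat{\mathbf{Y}})-1\big)$, i.e.\ $\delta\ge C^{1/d}\,2^{-(I(\mathbf{X};\hat{\mathbf{Y}})+1)/d}$. Because this holds for every admissible $\delta$, the infimum $\delta^*$ obeys the same bound; identifying the extremal operating point, where the whole budget $I(\mathbf{X};\hat{\mathbf{Y}})$ (plus the one bit of Fano slack) is spent distinguishing the $K(\delta)$ balls, converts the inequality into the claimed equality.

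The step I expect to be the main obstacle is the passage from the continuous statement ``$\hat{\mathbf{x}}\in\mathcal{B}(\mathbf{x},\delta)$'' to a genuine discrete decoding-error event ``$\hat{Z}\ne Z$'' with vanishing probability: a point within distance $\delta$ of $\mathbf{x}$ need not lie in the same radius-$\delta$ tile, so one must either align the tiling scale of Lemma~\ref{lemma:covering_number} with the confinement radius or absorb a bounded geometric factor, and this is exactly what the additive ``$+1$'' (one bit) in the exponent accounts for. A secondary subtlety is justifying the worst-case choice $H(Z)=\log_2 K(\delta)$, which is what makes $\delta^*$ a bound on the \emph{maximum} semantic inconsistency rather than a typical-case quantity.
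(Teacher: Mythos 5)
Your proposal is correct and follows essentially the same route as the paper: Fano's inequality applied to the ``which semantic ball contains $\mathbf{X}$'' classification problem, combined with the covering number $K(\delta)\approx C\,\delta^{-d}$ from Lemma~\ref{lemma:covering_number}, then solving $\log K(\delta)\le I(\mathbf{X};\hat{\mathbf{Y}})+1$ for $\delta$. Your explicit introduction of the discrete indices $Z,\hat{Z}$, the data-processing step $I(Z;\hat{Z})\le I(\mathbf{X};\hat{\mathbf{Y}})$, and the flagged tiling-versus-confinement subtlety simply make rigorous what the paper compresses into ``the problem can be interpreted as a classification problem,'' so the two arguments coincide in substance.
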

\begin{proof}
 The problem can be interpreted as a classification problem, and Fano's Inequality yields:
  \begin{equation}
    H(\mathbf{X}) - I(\mathbf{X} ; \hat{\mathbf{Y}}) \leq H(P_e) + P_e \log (K(\delta) - 1).
\end{equation}
Rearranging  and relaxing the terms, we derive $P_e$ satisfies:
\begin{equation}
    P_e \geq 1 - \frac{I(\mathbf{X} ; \hat{\mathbf{Y}}) + 1}{\log K(\delta)}.
\end{equation}
To allow $P_e \rightarrow 0$, the number of the ball should satisfy:
\begin{equation}
    K(\delta) \leq 2^{I(\mathbf{X} ; \hat{\mathbf{Y}}) + 1}.
\end{equation}
Consider the relationship between $K$ and $\delta$ in the Lemma \ref{lemma:covering_number}, the radius $\delta$ should satisfy:
\begin{equation}\label{delta_bound}
    \delta \geq C^{\frac{1}{d}} 2^{-\frac{I(\mathbf{X} ; \hat{\mathbf{Y}})+1}{d}}.
\end{equation}
Therefore, the lower bound of $\delta$ is $\delta^*=C^{\frac{1}{d}} 2^{-\frac{I(\mathbf{X} ; \hat{\mathbf{Y}})+1}{d}}$.
\end{proof}
The theorem formalizes a fundamental trade-off where \textbf{increasing mutual information decreases semantic inconsistency, yet the efficiency of this reduction is strictly governed by the manifold's intrinsic dimensionality $d$.}

\begin{corollary}
\label{cor:optimality}
Based on Proposition \ref{prop:variational_info}, the training objective of JSCGC is equivalent to maximizing the mutual information $I(\mathbf{X}; \hat{\mathbf{Y}})$. Coupled with Theorem \ref{thm:min_info_bound}, we demonstrate that JSCGC minimizes the upper bound of the maximum semantic inconsistency under specific channel conditions.
\end{corollary}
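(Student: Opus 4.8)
The plan is to obtain Corollary~\ref{cor:optimality} by composing Proposition~\ref{prop:variational_info} with Theorem~\ref{thm:min_info_bound} and exploiting the monotonicity of the bound in the latter. First I would recall from the derivation of Proposition~\ref{prop:variational_info} that the loss $\mathcal{L}(\theta,\phi)$ is obtained as an evidence lower bound, so that $H(\mathbf{X}\mid\hat{\mathbf{Y}}) \le \mathcal{L}(\theta,\phi)$ for every choice of encoder and generator parameters. Since the source distribution $p(\mathbf{X})$ is fixed, $H(\mathbf{X})$ is a constant, and the identity $I(\mathbf{X};\hat{\mathbf{Y}}) = H(\mathbf{X}) - H(\mathbf{X}\mid\hat{\mathbf{Y}})$ then yields $I(\mathbf{X};\hat{\mathbf{Y}}) \ge H(\mathbf{X}) - \mathcal{L}(\theta,\phi) =: I_{\mathrm{lb}}(\theta,\phi)$. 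Hence minimizing the JSCGC objective over $(\theta,\phi)$ is exactly maximizing the attainable lower bound $I_{\mathrm{lb}}(\theta,\phi)$ on the transmitted mutual information; when the flow-matching generator is expressive enough for $q_\theta(\mathbf{x}\mid\hat{\mathbf{y}})$ to match the true posterior $p(\mathbf{x}\mid\hat{\mathbf{y}})$ the bound is tight and this coincides with maximizing $I(\mathbf{X};\hat{\mathbf{Y}})$ itself, which recovers the first assertion of the corollary.

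Next I would invoke Theorem~\ref{thm:min_info_bound}. The only property needed is that $I \mapsto C^{1/d} 2^{-(I+1)/d}$ is strictly decreasing in $I$ for fixed manifold constants $C$ and $d$, which are intrinsic to the data and therefore unaffected by the optimization over $(\theta,\phi)$. Substituting the value $I_{\mathrm{lb}}(\theta,\phi)$ attained by training into this formula gives a quantity $\tilde\delta(\theta,\phi) := C^{1/d} 2^{-(I_{\mathrm{lb}}(\theta,\phi)+1)/d}$, and since $I_{\mathrm{lb}}(\theta,\phi) \le I(\mathbf{X};\hat{\mathbf{Y}})$ monotonicity gives $\tilde\delta(\theta,\phi) \ge \delta^*$. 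Thus $\tilde\delta$ is an explicit upper bound on the lower bound $\delta^*$ of the maximum semantic inconsistency, and minimizing the JSCGC loss, which maximizes $I_{\mathrm{lb}}$, simultaneously minimizes $\tilde\delta$. I would emphasize that the channel map $W$ in $\hat{\mathbf{y}} = W(\mathbf{y})$ is held fixed throughout while only $(\theta,\phi)$ vary; this is the precise meaning of ``under specific channel conditions'': for any prescribed channel, the JSCGC-trained encoder-generator pair is the one that minimizes this bound among all admissible pairs.

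The main obstacle is the gap between maximizing a variational lower bound on $I(\mathbf{X};\hat{\mathbf{Y}})$ and maximizing $I(\mathbf{X};\hat{\mathbf{Y}})$ exactly, i.e., the usual looseness of the ELBO. I would resolve it in one of two ways: either adopt the idealized-expressivity assumption that makes the ELBO tight, so that the equivalence claimed in Corollary~\ref{cor:optimality} is literal; or, more conservatively, leave the statement as written -- JSCGC minimizes an \emph{upper bound} of the maximum semantic inconsistency, namely $\tilde\delta$ above -- which holds with no expressivity assumption since only the sign of the ELBO gap and the monotonicity of $\delta^*(I)$ enter the argument. A secondary check, worth a line, is that the Fano-based bound of Theorem~\ref{thm:min_info_bound} treats $\mathbf{X}$ through its $\epsilon$-quantization, so $I_{\mathrm{lb}}$ and $I(\mathbf{X};\hat{\mathbf{Y}})$ should be read in that discretized sense when plugged into the formula; this does not affect monotonicity and hence not the conclusion.
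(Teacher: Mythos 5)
Your proposal is correct and follows essentially the same route the paper intends: the corollary is asserted there as an immediate composition of Proposition~\ref{prop:variational_info} (minimizing the variational loss maximizes a lower bound on $I(\mathbf{X};\hat{\mathbf{Y}})$) with the monotone decrease of the bound in Theorem~\ref{thm:min_info_bound}, for a fixed channel. Your treatment is in fact somewhat more careful than the paper's, since you make the ELBO-gap issue explicit (noting only that the paper's loss actually keeps just the posterior term of the ELBO, so $H(\mathbf{X}\mid\hat{\mathbf{Y}})\le\mathcal{L}$ holds up to the dropped reconstruction term --- an approximation inherited from Proposition~\ref{prop:variational_info} itself) and you identify precisely which quantity, $\tilde\delta$, is being minimized.
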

\begin{remark}\label{change}
This theorem implies a fundamental shift in error manifestation compared to conventional distortion-based paradigms. In traditional schemes, a reduction in mutual information $I(\mathbf{X}; \hat{\mathbf{Y}})$ typically results in escalating reconstruction distortion. In contrast, because the generative model is inherently constrained to the natural data manifold, the theorem implies that as a decrease in $I(\mathbf{X}; \hat{\mathbf{Y}})$ leads to heightened semantic inconsistency rather than purely increased pixel-wise distortion. Consequently, while the generated content remains perceptually realistic, it gradually drifts semantically from the source signal.
\end{remark}

\section{Implementation and Experimental Results}
\begin{figure*}[t]
  \centering
  \includegraphics[width=0.95\textwidth]{./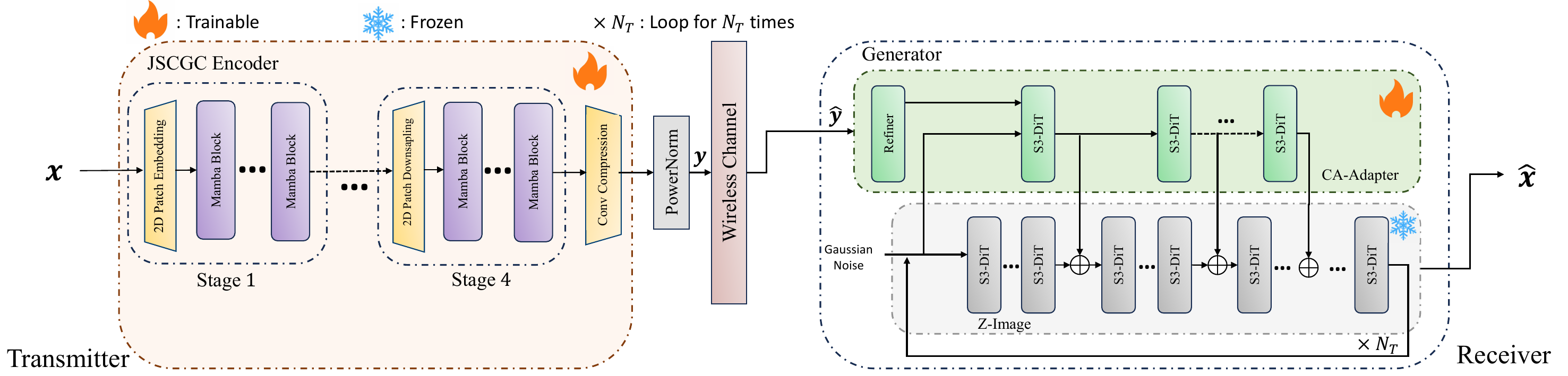}
  \caption{An implementation of the proposed JSCGC scheme.} 
  \label{JSCGC_model}

\end{figure*}

\begin{figure}[t]
  \centering
  \includegraphics[width=0.48\textwidth]{./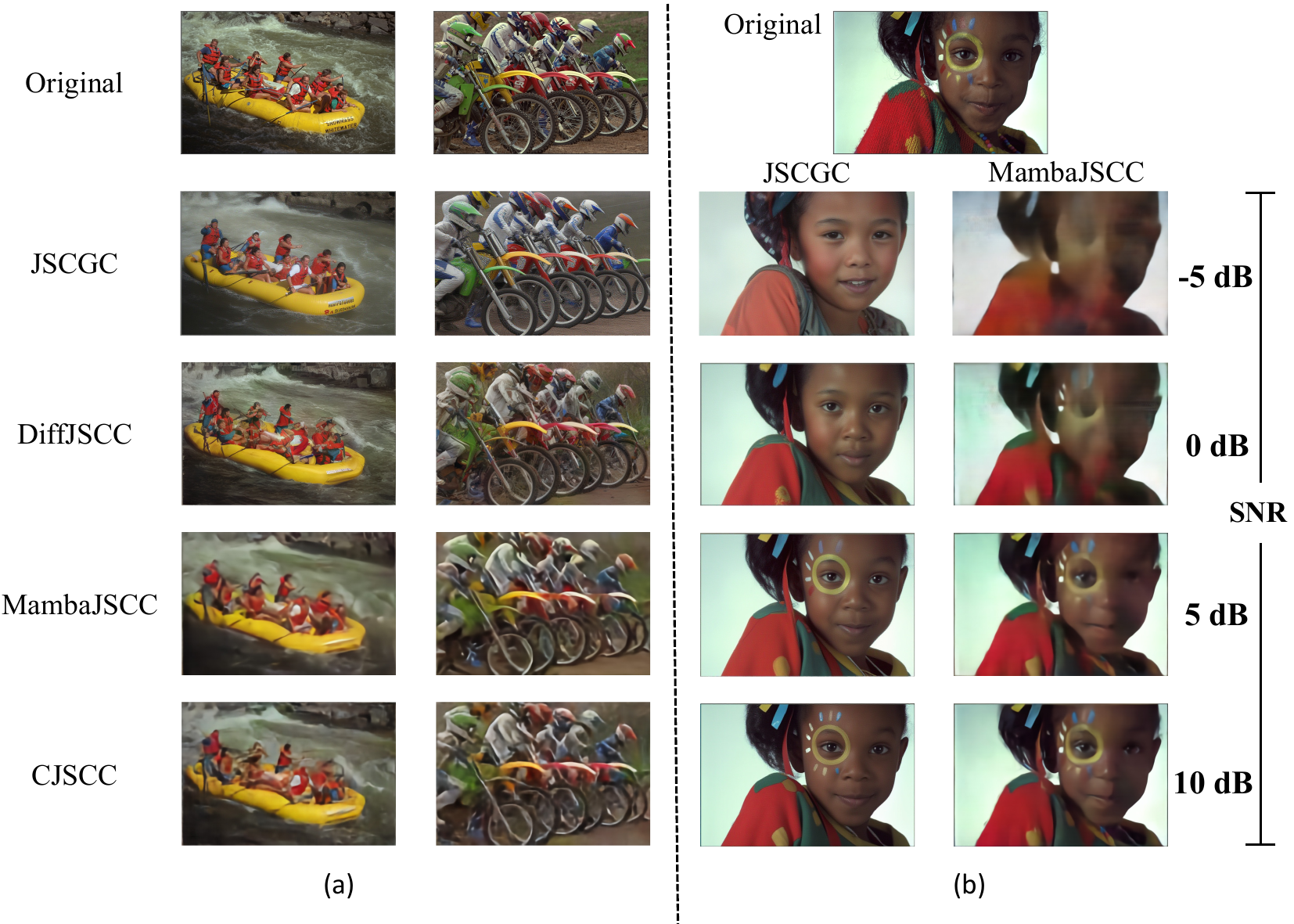}
  \caption{Visualization results under the AWGN channel.} 
  \label{Visualizing}

  \end{figure}

  \begin{figure}[t]
  \centering
  \subfigure[]{\includegraphics[width=0.24\textwidth]{./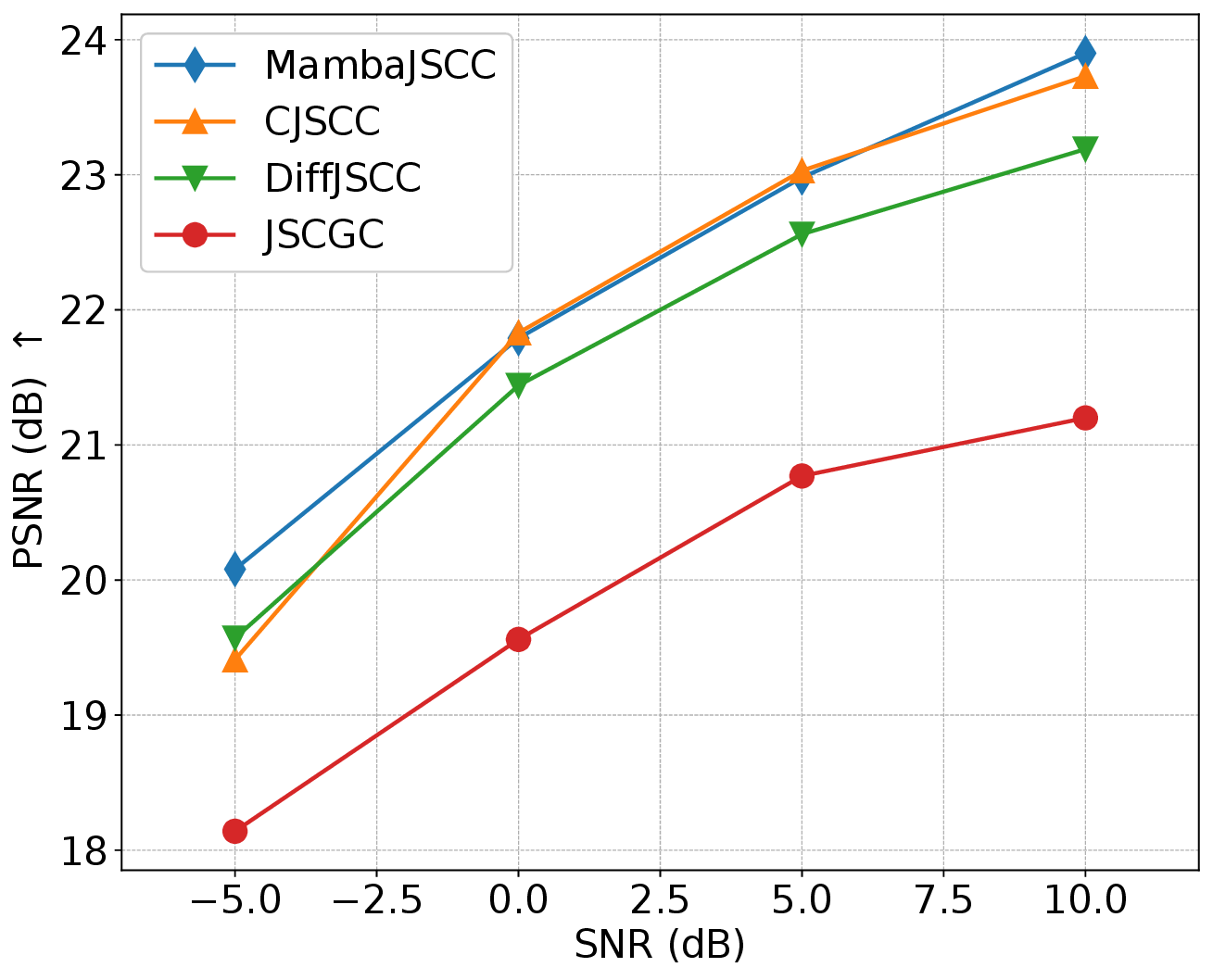}}
  \subfigure[]{\includegraphics[width=0.24\textwidth]{./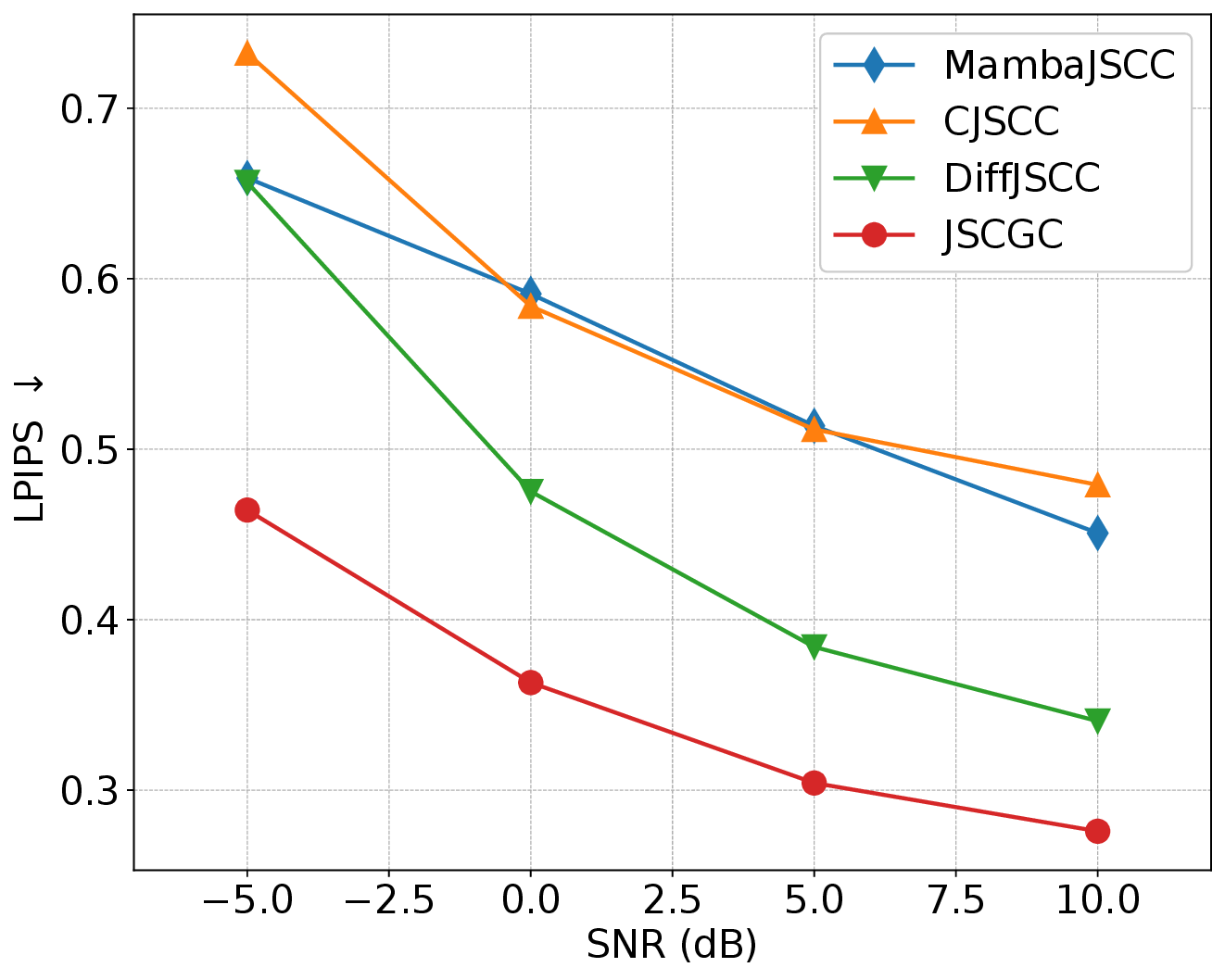}}
  \subfigure[]{\includegraphics[width=0.24\textwidth]{./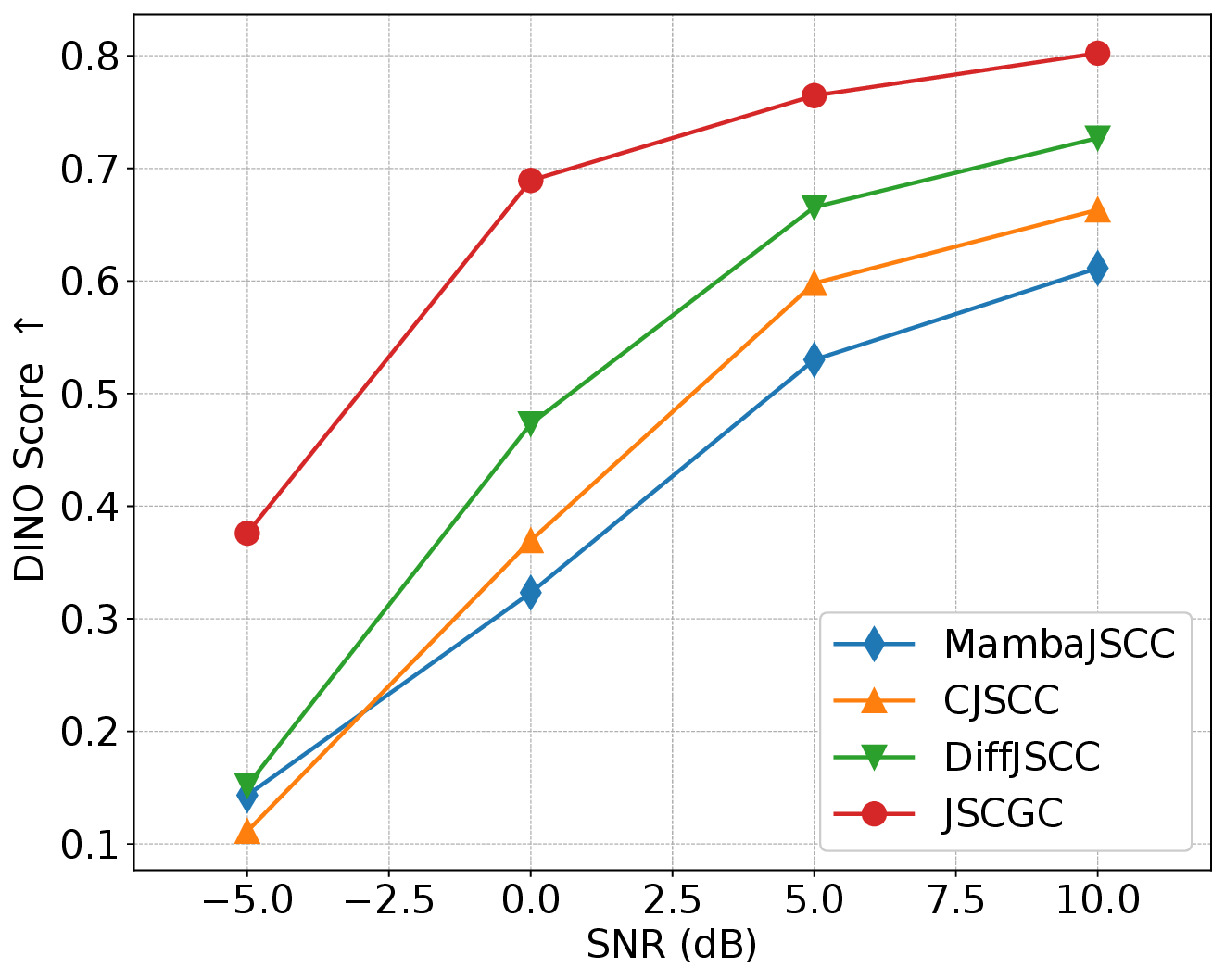}}
  \subfigure[]{\includegraphics[width=0.24\textwidth]{./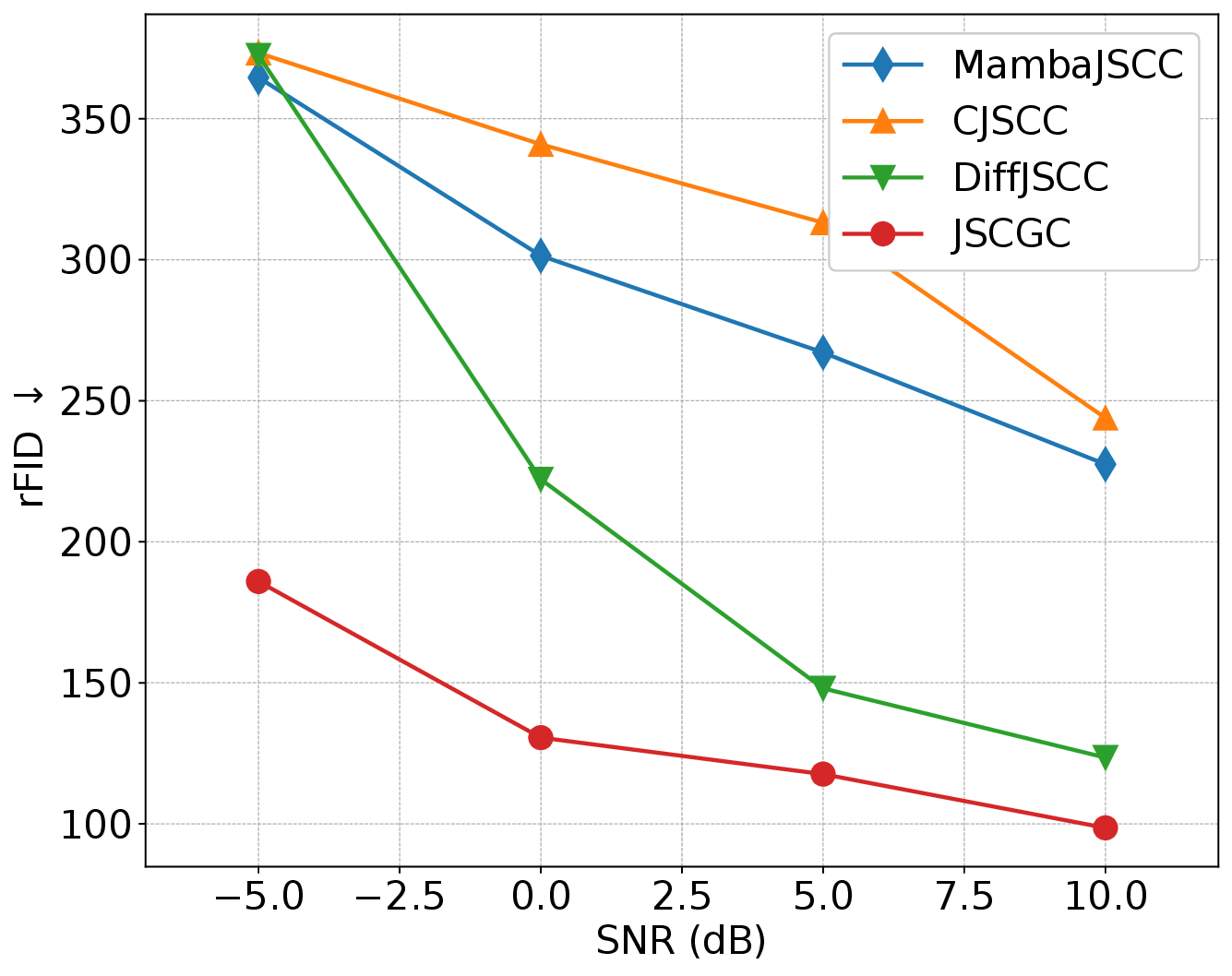}}
  \caption{The performance of different schemes versus SNR under AWGN channel with CBR = $\frac{1}{384}$. (a) PSNR. (b) LPIPS. (c) DINO score. (d) rFID.} 
  \label{awgn_results}

  \end{figure}
\subsection{Architecture Design}

The architectural implementation of the proposed JSCGC framework is illustrated in Fig.~\ref{JSCGC_model}.
To facilitate efficient deployment, we perform image transmission in the latent space, where $\mathbf{x}$ denotes the latent representation of the source image with fixed distribution extracted via a pre-trained extractor. The encoder is designed based on the MambaJSCC architecture~\cite{MambaJSCCWu}, which employs multi-stage Mamba blocks to efficiently extract high-level semantic features. These features are subsequently mapped to the channel input for transmission. At the receiver, the generator is implemented using the Z-Image as the backbone to reduce computational complexity while preserving strong generative capability. Z-Image is a latent-space generative model built upon S3-DiT~\cite{z-image} and pretrained on large-scale datasets, enabling it to capture the distribution of natural data.

However, due to the large-scale nature of the generative model, directly conditioning it on the received signal $\hat{\mathbf{y}}$ is nontrivial. To address this issue while preserving the pretrained knowledge of the generator, we design a communication-aware adapter (CA-Adapter) to modulate the generation process using the received signal. Structurally, the CA-Adapter consists of $N_I$ cascaded stages. At each stage, the adapter takes the received signal $\hat{\mathbf{y}}$ and the intermediate latent variable $\mathbf{x}_t$ as inputs. The resulting feature maps from the $i$-th adapter stage are then injected into the $L_i$-th layer of the Z-Image backbone via element-wise addition, enabling effective conditioning on the communication signal.


\subsection{Experimental Setup}
We utilize a subset of the Open Images dataset consisting of 500k randomly sampled images. All images are randomly cropped to a resolution of $256 \times 256$. The Kodak dataset is used for evaluation. We compare the proposed method with two representative distortion-based JSCC approaches: the CNN-based JSCC (CJSCC) method in \cite{yang2025diffusion} and MambaJSCC~\cite{MambaJSCCWu}, both of which are trained using MSE distortion functions. In addition, we also compare with the generative method DiffJSCC~\cite{yang2025diffusion}. For a fair comparison, the number of generative steps for both Z-Image and the Stable Diffusion model used in DiffJSCC is set to $9$. We consider an additive white Gaussian noise (AWGN) channel with the signal-to-noise ratio (SNR) ranging from $-5$~dB to $10$~dB. The channel bandwidth ratio (CBR) is set to $\frac{n}{l} = \frac{1}{384}$. To comprehensively evaluate the performance of the proposed method, we adopt peak signal-to-noise ratio (PSNR), LPIPS, DINO score \cite{ruiz2023dreambooth}, and reconstruction FID (rFID) \cite{heusel2017gans} as evaluation metrics.

To accelerate convergence, JSCGC is initialized with a pretrained Z-Image model and a pretrained MambaJSCC encoder, and then jointly trained for $10$k iterations for each SNR setting. All experiments are conducted on four NVIDIA RTX 4090 GPUs using DeepSpeed ZeRO-3 with a batch size of 8. 

\subsection{Results}
We first present visualization results under the AWGN channel at $\text{SNR}=5$~dB for different schemes. As shown in Fig.~\ref{Visualizing}(a), distortion-based JSCC methods tend to produce blurry reconstructions. Although DiffJSCC improves image sharpness, its reliance on the blurred outputs of CJSCC results in weaker semantic consistency compared with the ground truth. In contrast, the proposed JSCGC effectively balances visual realism and semantic fidelity, generating more authentic and semantically consistent details. Fig.~\ref{Visualizing}(b) illustrates the performance evolution as the SNR decreases under the AWGN channel. As the SNR decreases, the outputs of distortion-based JSCC methods become increasingly blurred and eventually unrecognizable. In contrast, JSCGC maintains high perceptual quality even at low SNRs; however, the semantic consistency gradually degrades as the guidance signal weakens. This observation is consistent with Remark~\ref{change}.

We then conduct numerical experiments, and the results are shown in Fig.~\ref{awgn_results}, which demonstrate that JSCGC significantly improves perceptual image quality. Although a decrease in PSNR is observed, this behavior is consistent with the well-known rate–distortion–perception trade-off \cite{blau2019rethinking}, as JSCGC is explicitly optimized under a perceptual constraint.

\section{Conclusion}
In this paper, we proposed a novel JSCGC framework that shifts communication from deterministic reconstruction to conditional generation. By maximizing mutual information under channel constraints without relying on explicit distortion metrics, JSCGC enables perceptually faithful communication. We further derived a theoretical lower bound on the maximum semantic inconsistency, revealing the intrinsic resolution limit of generative communication. Experimental results demonstrate that JSCGC achieves superior perceptual quality and exhibits a fundamentally different error behavior from distortion-based JSCC schemes.

\bibliographystyle{IEEEtran}
\bibliography{reference}{}
\end{document}